\newtheorem{definition}{Definition}
\newtheorem{theorem}{Theorem}
\newtheorem{lemma}{Lemma}
\let\Algorithm\algorithm
\renewcommand\algorithm[1][]{\Algorithm[#1]\setstretch{1.3}}
\DeclarePairedDelimiter\ceil{\lceil}{\rceil}
\begin{document}

\title{Hardware Based Projection onto \\ The Parity Polytope and Probability Simplex}

\author{\IEEEauthorblockN{Mitchell Wasson}
\IEEEauthorblockA{University of Toronto\\
Email: m.wasson@mail.utoronto.ca}
\and
\IEEEauthorblockN{Stark C. Draper}
\IEEEauthorblockA{University of Toronto\\
Email: stark.draper@utoronto.ca}
}

\maketitle

\begin{abstract}
This paper is concerned with the adaptation to hardware of methods for Euclidean norm projections onto the parity polytope and probability simplex. We first refine recent efforts to develop efficient methods of projection onto the parity polytope. Our resulting algorithm can be configured to have either average computational complexity $\mathcal{O}\left(d\right)$ or worst case complexity $\mathcal{O}\left(d\log{d}\right)$ on a serial processor where $d$ is the dimension of projection space. We show how to adapt our projection routine to hardware. Our projection method uses a sub-routine that involves another Euclidean projection; onto the probability simplex. We therefore explain how to adapt to hardware a well know simplex projection algorithm. The hardware implementations of both projection algorithms achieve area scalings of $\mathcal{O}(d\left(\log{d}\right)^2)$ at a delay of $\mathcal{O}(\left(\log{d}\right)^2)$. Finally, we present numerical results in which we evaluate the fixed-point accuracy and resource scaling of these algorithms when targeting a modern FPGA.
\end{abstract}

\section{Introduction}

The need for hardware-compatible algorithms that can project onto the parity polytope, and onto the probability simplex, is motivated by linear program (LP) decoding techniques for LDPC codes. LP decoding of error-correcting codes is a relatively new concept that originated in Jon Feldman's Ph.D. \cite{feldman_phd,feldman_journal}. Since then, much work has been done to make this approach more appealing to practical implementation. For example, progress has been made by approximating the LP decoding problem by a solver that has linear complexity in blocklength \cite{vontobel_journal,burshtein_journal}. Additionally, \cite{seigel_adaptive} showed how the complexity and performance of LP decoding can be improved by starting with a simple linear program and adaptively adding constraints. More recently, \cite{draper_admm} applied the ADMM decomposition technique from \cite{boyd_admm} to LP decoding. This decomposition gives rise to a distributed algorithm which relies heavily on parity polytope projections. We will see that projection onto the probability simplex plays a integral role in parity polytope projections.

Hardware implementation of these projection algorithms are needed in order for ADMM decoding to be practical. This is especially true for applications with stringent reliability requirements and high data rates (e.g., optical transport). These are also the applications for which the theoretical guarantees of LP decoding are especially attractive. Therefore, this study is the first step in determining whether or not ADMM is a commercially viable LP decoding algorithm.

First, we review the state of available projection techniques in Section \ref{algoBackSec}. Then we evaluate and choose algorithms that are suitable for hardware implementation in Section \ref{algoAdaptSec}. For projecting onto the parity polytope, we develop a new technique based on previous approaches, but more compatible with hardware. The correctness of this method is proved. Section \ref{hardAdaptSec} is dedicated to the adaptation of the algorithms to hardware. Finally, we implement the selected projection algorithms in Verilog and synthesize them for an Altera Stratix V FPGA. This allows us to investigate fixed-point precision effects as well as resource scaling. Findings are reported in Section \ref{resultSec}.

Because of the spreading use of FPGA and hardware acceleration in computing, we anticipate the developments and experiments in this paper will be of interest to fields other than error-correction coding (e.g., optimization and machine learning). Therefore, we present our results in full generality, outside the specific context of error-correction.

\section{Background}
\label{algoBackSec}
\subsection{Projecting onto the Parity Polytope}
As visualized in Fig. \ref{ppPic}, the parity polytope is defined as the convex hull of all even weight vertices of the unit hyper-cube, i.e.,

\begin{equation*}
\mathbb{PP}_d := \text{conv}\left(\left\{ e \in \left\{ 0,1 \right\}^d : \| e \|_1 \text{ even} \right\}\right).
\end{equation*}

\begin{figure}
	\centering
	\includegraphics[scale=0.55]{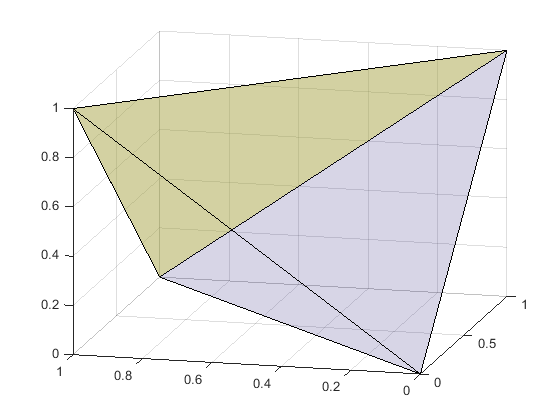}
	\caption{Visualization of $\mathbb{PP}_3$}
	\label{ppPic}
\end{figure}

The parity polytope is a polyhedron. Therefore, as discussed in \cite{draper_admm}, projecting a point $v\in\mathbb{R}^d$ onto $\mathbb{PP}_d$ is a quadratic program,

\begin{equation}
\label{PPproject}
\prod\nolimits_{\mathbb{PP}_d}\left( v \right) = \arg\min_{z \in \mathbb{PP}_d} \| v - z \|_2^2,
\end{equation}

\noindent where (Euclidean) projection onto the set $\mathcal{X}$ is denoted $\prod\nolimits_{\mathcal{X}}\left( \cdot \right)$.

A computationally tractable approach to solving (\ref{PPproject}) was given in \cite{draper_admm}. The approach involves a ``two-slice'' characteristic of points in the parity polytope. These points can be represented as a convex combination of binary vectors with weight $r$ or $r+2$ for some even integer $r$. The authors show that the projection is invariant to the ordering of input vector components. By first applying a sort, finding the two-slice representation of the projection is reduced to an optimization over a piecewise linear function that has $\mathcal{O}\left(d\right)$ complexity. The two-slice approach is limited by the sorting and desorting of vector components, which gives it a computational complexity of $\mathcal{O}\left(d\log{d}\right)$. 

Zhang and Siegel developed a more efficient projection algorithm in \cite{siegel_projection}. Their algorithm first determines the facet of the parity polytope on which the projection lies. They then project onto said facet. Determining the correct parity polytope facet is achieved by employing the authors' ``cut search'' algorithm, fully described in \cite{siegel_cut}. To project a point onto the facet found, the authors use the same optimization over a piecewise linear function described in \cite{draper_admm}. This method is less complex than \cite{draper_admm} because the sort and desort operations in \cite{draper_admm} are replaced by the cut search operation. However, due to this change a sort is still required in solving the piecewise linear optimization problem, resulting in the same order computational complexity $\mathcal{O}\left(d\log{d}\right)$.

Even more recently, another projection algorithm was proposed in \cite{kleijn_projection}. The first improvement is to use a partial sort on input vector components instead of a full sort (as in \cite{draper_admm}) to obtain the same result. Second, the authors show that the projection's two-slice representation can be computed via a projection onto the probability simplex. Both of these operations have expected complexity that is linear in dimension, giving the entire projection algorithm an expected computational complexity of $\mathcal{O}\left(d\right)$. However, worst case complexity is quadratic.

\subsection{Projecting onto the Probability Simplex}
\label{simplexBackSubSec}
The probability simplex $\mathbb{S}_d$ is defined as the set of all positive vectors whose components sum to one. Precisely:

\begin{equation*}
\mathbb{S}_d := \left\{ x \in \mathbb{R}^d : 1^\top x = 1, x_i \geq 0 \right\}.
\end{equation*}

\begin{figure}
	\centering
	\includegraphics[scale=0.5]{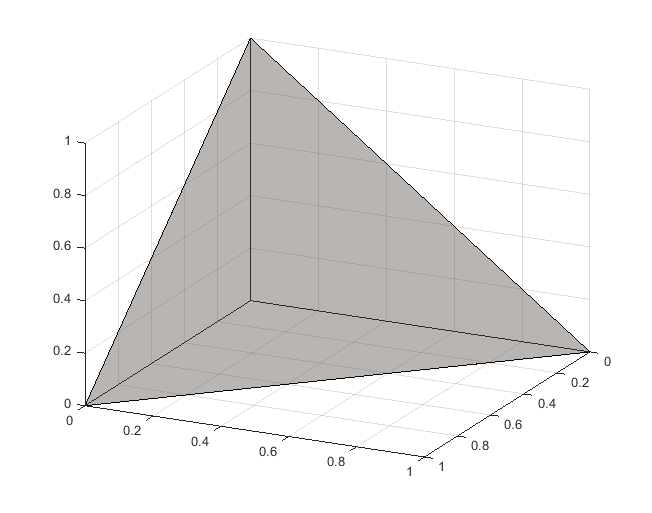}
	\caption{Visualization of $\mathbb{S}_3$}
	\label{simplexPic}
\end{figure}

\noindent Fig. \ref{simplexPic} visualizes the probability simplex in $\mathbb{R}^3$. Similar to the parity polytope, projection onto this set is a quadratic program:

\begin{equation*}
\prod\nolimits_{\mathbb{S}_d}\left( v \right) = \arg\min_{z \in \mathbb{S}_d} \| v - z \|_2^2.
\end{equation*}

An excellent reference for projection onto the probability simplex is found in \cite{duchi_simplex}. The authors first develop a method for simplex projection that uses a sorting of the input components. The complexity is $\mathcal{O}\left(d\log{d}\right)$. Next, the authors develop a linear expected complexity (quadratic worst case complexity) algorithm that performs similar calculations to their first approach in an iterative manner. Additionally, the authors describe how to use projection onto the probability simplex to compute projection onto the $\ell_1$-ball.

\section{Algorithmic Adaptation}
\label{algoAdaptSec}

Before tackling the hardware implementation of the projection algorithms, we evaluate the techniques discussed in Section \ref{algoBackSec} for their hardware compatibility. Criteria we use include computational complexity, delay, parallizability, and input invariance. By input invariance we mean that the same operations are performed regardless of input values.

\subsection{Adapting the Parity Polytope Projection}
The parity polytope projection routines specified in \cite{draper_admm} and \cite{siegel_projection} contain the same overall structure and computational complexity which scales well. The drawback of these algorithms is their iterative nature. This results in high delay, low potential for parallelization, and input dependence.

The projection routine from \cite{kleijn_projection} has good overall structure and excellent computational complexity. However, we believe it is difficult to adapt this algorithm's average complexity to hardware. The technique relies on a partial sort for which we have not found a general hardware technique (except to implement a partial sort via a full sort). Additionally, the simplex projection suggested in \cite{kleijn_projection} is poorly suited for hardware. We will see another routine for the latter task that maps well to hardware.

Each of the algorithms discussed above has its own merits, yet we believe none is suitable for hardware adaptation. Therefore we introduce a novel approach by combining two of the techniques. This new method is similar to that presented in \cite{kleijn_projection}. However, we employ the cut search method used in \cite{siegel_projection} to identify which facet to project onto. We then perform the projection using a probability simplex projection. Doing this avoids the partial sort and replaces it with a guaranteed linear complexity component. Algorithm \ref{pppAlgorithm} describes this method in full. Definition \ref{transformDef} provides a similarity transformation that is used in the proposed algorithm.

\begin{definition}
	\label{transformDef}
	Given a vector $v\in\mathbb{R}^d$ and a vector $f\in\{0,1\}^d$, we define the operator $T_f:\mathbb{R}^d \to \mathbb{R}^d$ as
	
	\begin{equation*}
		T_f\left(v\right)_i = 
		\begin{cases}
		1-v_i&\text{if } f_i = 1\\
		v_i  &\text{if } f_i = 0
		\end{cases}
	\end{equation*}
	
	\noindent where $v = T_f\left(T_f\left(v\right)\right)$. Note that this is very similar to Definition 2 in \cite{kleijn_projection}. 
\end{definition}

\begin{algorithm}
\caption{Hardware Compatible Projection onto $\mathbb{PP}_d$}
\label{pppAlgorithm}
\textbf{Input:} $v\in\mathbb{R}^d$\\
\textbf{Output:} $w \in \mathbb{PP}_d$
\begin{algorithmic}[1]
	\STATE $\hat{v} = \prod\nolimits_{\left[0,1 \right]^d}\left( v \right)$
	
	\STATE $
	f_i= 
	\begin{cases}
	1& \text{if } \hat{v}_i > \frac{1}{2}\\
	0           & \text{otherwise}
	\end{cases}
	$ \hspace{\algorithmicindent} $i=1,\dots,d$
	
	\IF{$1^\top f$ is even}
		\STATE $i^\ast = \arg\min_{\substack{i}} |\frac{1}{2}-\hat{v}_i|$ \label{argminLine}
		\STATE $f_{i^\ast} = 1- f_{i^\ast}$
	\ENDIF

	\STATE $\tilde{v} = T_f\left(v\right)$

	\IF{$ 1^\top \prod\nolimits_{\left[0,1 \right]^d}\left( \tilde{v} \right)  \geq 1$} \label{ppConditionLine}
		\STATE $w = \hat{v}$ \label{ppReturnBoxLine}
	\ELSE
		\STATE $u = \prod\nolimits_{\mathbb{S}_d}\left( \tilde{v} \right)$ \label{ppReturnSimpLineOne}
		\STATE $w = T_f\left( u  \right)$\label{ppReturnSimpLineTwo}
	\ENDIF	
	
\end{algorithmic}
\end{algorithm}

We demonstrate the correctness of the algorithm by partitioning the possible inputs into two cases. The first case is when the unit cube projection of the algorithm input is in the parity polytope. i.e., $\prod\nolimits_{\left[0,1 \right]^d}\left( v \right) \in \mathbb{PP}_d$. The second case covers all other inputs.

As we will see, Theorem \ref{conditionTheorem} (below) provides the test to differentiate between the two cases. This theorem is used on line \ref{ppConditionLine} of Algorithm \ref{pppAlgorithm}. First we introduce a lemma that aids in the development of the theorem.

\begin{lemma}
	\label{swapLemma}
	Given $v\in\mathbb{R}^d$ and $f\in\left\{0,1\right\}^d$,
	\begin{equation}
	\label{swapEquation}
	T_f\left( \prod\nolimits_{\left[0,1 \right]^d} \left( v \right)\right) = \prod\nolimits_{\left[0,1 \right]^d}\left( T_f \left( v \right)\right).
	\end{equation}
\end{lemma}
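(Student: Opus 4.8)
The plan is to exploit the fact that both operations appearing in (\ref{swapEquation}) act coordinatewise, so that the vector identity reduces to a scalar identity checked one component at a time. Projection onto the box $[0,1]^d$ is the separable clamp map $\prod\nolimits_{[0,1]^d}(v)_i = \min(\max(v_i,0),1)$, and by Definition \ref{transformDef} the transform $T_f$ also acts independently on each coordinate, either as the identity (when $f_i=0$) or as the reflection $x\mapsto 1-x$ (when $f_i=1$). Since each side of (\ref{swapEquation}) is a composition of separable maps, it suffices to establish the identity componentwise and then reassemble the full vector.

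Fixing an index $i$, I would split into the two cases determined by $f_i$. When $f_i=0$ both sides collapse to the clamp of $v_i$ and there is nothing to prove. When $f_i=1$ the claim becomes the scalar identity $1-\min(\max(v_i,0),1)=\min(\max(1-v_i,0),1)$; that is, reflecting about $\tfrac{1}{2}$ commutes with clamping to $[0,1]$. This holds because $x\mapsto 1-x$ maps the interval $[0,1]$ onto itself while interchanging $(-\infty,0)$ with $(1,\infty)$. A short three-way case check on the sign regions of $v_i$ (below $0$, inside $[0,1]$, above $1$) confirms equality in each region.

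The conceptual heart of the argument, and the only place where any care is needed, is the observation that $T_f$ is an isometry of $\mathbb{R}^d$ that maps the unit cube $[0,1]^d$ onto itself. Once this symmetry is recognized, the lemma is an instance of the general principle that a Euclidean projection commutes with any isometry preserving the target set: if $g$ is an isometry with $g([0,1]^d)=[0,1]^d$, then substituting $z=g(y)$ in $\arg\min_{z\in[0,1]^d}\|g(v)-z\|_2^2$ and using $\|g(v)-g(y)\|_2=\|v-y\|_2$ shows the minimizer equals $g$ applied to $\prod\nolimits_{[0,1]^d}(v)$. I expect no genuine obstacle here; the componentwise verification is routine, and the real value of the lemma is that it licenses freely moving $T_f$ through the box projection in the forthcoming correctness proof of Algorithm \ref{pppAlgorithm}.
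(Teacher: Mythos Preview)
Your proof is correct and follows essentially the same route as the paper: reduce to a single coordinate by separability, dispose of the $f_i=0$ case trivially, and handle $f_i=1$ by a three-way case split on whether $v_i$ lies below, inside, or above $[0,1]$. Your closing isometry observation (that $T_f$ is a cube-preserving isometry, hence commutes with Euclidean projection onto the cube) is a nice conceptual bonus not present in the paper's proof.
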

\begin{proof}
	First note that both $T_f \left( \cdot \right)$ and $\prod\nolimits_{\left[0,1 \right]^d} \left( \cdot \right)$ are component-wise operators. Therefore the lemma can be proved by examining a single component. We now consider the cases of $f_i$. When $f_i=0$, then $T_f(v)_i = v_i$. This gives equality in (\ref{swapEquation}) for all cases of $v_i$. 
	
	When $f_i=1$, we consider three cases for $v_i$. First, if $v_i \leq 0$, the left hand side of (\ref{swapEquation}) becomes $1$. Also, $T_f(v)_i$ will be greater than $1$, making the right hand side $1$. Second, if $v_i \geq 0$, the left hand side becomes $0$. Also, $T_f(v)_i$ will be less than $0$, making the right hand side $0$. Finally, if $0 < v_i < 1$ the left hand side becomes $1 - v_i$. The right hand side also reduces to $1-v_i$ since $1-v_i$ is between $0$ and $1$.
\end{proof}

\begin{theorem}
	\label{conditionTheorem}
	Given $v\in \mathbb{R}^d$ and $f\in\left\{0,1\right\}^d$ as calculated in Algorithm \ref{pppAlgorithm},
	\begin{equation*}
		1^\top \prod\nolimits_{\left[0,1 \right]^d}\left( T_f\left( v \right) \right) \geq 1 \text{ iff } \prod\nolimits_{\left[0,1 \right]^d} \left( v \right) \in \mathbb{PP}_d\text{.}
	\end{equation*}
\end{theorem}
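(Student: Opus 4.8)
The plan is to reduce the statement to a single, carefully chosen facet inequality of $\mathbb{PP}_d$. First I would invoke Lemma \ref{swapLemma} to replace $\prod\nolimits_{\left[0,1\right]^d}\left(T_f\left(v\right)\right)$ by $T_f\left(\hat v\right)$, where $\hat v = \prod\nolimits_{\left[0,1\right]^d}\left(v\right)$ is the quantity computed on line 1 of the algorithm. Summing the components of $T_f\left(\hat v\right)$ and applying Definition \ref{transformDef} gives
\begin{equation*}
1^\top \prod\nolimits_{\left[0,1\right]^d}\left(T_f\left(v\right)\right) = \sum_{i : f_i = 1}\left(1 - \hat v_i\right) + \sum_{i : f_i = 0}\hat v_i .
\end{equation*}
Thus the left-hand quantity in the theorem is exactly the value, at the point $\hat v$, of the linear form associated with the index set $S = \left\{ i : f_i = 1 \right\}$.

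Next I would recall the well-known halfspace description of the parity polytope: a point $x \in \left[0,1\right]^d$ lies in $\mathbb{PP}_d$ if and only if for every set $S$ of \emph{odd} cardinality
\begin{equation*}
\sum_{i \in S}\left(1 - x_i\right) + \sum_{i \notin S} x_i \geq 1 .
\end{equation*}
Because $\hat v$ is itself a projection onto $\left[0,1\right]^d$, it already satisfies the box constraints, so the membership $\hat v \in \mathbb{PP}_d$ is equivalent to all of these odd-cardinality inequalities holding simultaneously. It therefore suffices to show that the set $S = \left\{ i : f_i = 1 \right\}$ selected by the algorithm achieves the \emph{minimum} of the left-hand side over all odd-cardinality sets: if this most-violated inequality is satisfied, so are all the rest.

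The crux, and the step I expect to be the main obstacle, is verifying that the $f$ produced by lines 2--6 does identify this most-violated inequality. For a fixed coordinate $i$, the cost of placing $i$ inside $S$ versus outside $S$ differs by $1 - 2\hat v_i$, which is negative exactly when $\hat v_i > \tfrac{1}{2}$. Hence, \emph{ignoring the parity constraint}, the unconstrained minimizer is the threshold set $\left\{ i : \hat v_i > \tfrac{1}{2} \right\}$, which is precisely $f$ prior to the parity correction. I would then argue by exchange that when this threshold set has even cardinality, the cheapest way to restore odd cardinality is to flip the single coordinate whose flip increases the objective least; since flipping coordinate $i$ raises the objective by $2\left|\hat v_i - \tfrac{1}{2}\right|$, the optimal choice is $i^\ast = \arg\min_i \left|\tfrac{1}{2} - \hat v_i\right|$, exactly the correction made on line \ref{argminLine}. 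This establishes that $1^\top \prod\nolimits_{\left[0,1\right]^d}\left(T_f\left(v\right)\right)$ equals the minimum of the facet form over odd-cardinality sets.

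Combining the three steps, $1^\top \prod\nolimits_{\left[0,1\right]^d}\left(T_f\left(v\right)\right) \geq 1$ holds if and only if the most-violated odd-cardinality inequality is satisfied, which in turn holds if and only if $\hat v = \prod\nolimits_{\left[0,1\right]^d}\left(v\right) \in \mathbb{PP}_d$, proving the theorem.
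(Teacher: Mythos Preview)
Your proposal is correct and follows the same logical line as the paper: apply Lemma~\ref{swapLemma} to rewrite $1^\top \prod_{[0,1]^d}(T_f(v))$ as the cut-search functional evaluated at $\hat v$, and then observe that this is precisely the parity-polytope membership test for points already in the unit cube. The only difference is one of presentation: the paper's proof is two sentences that defer the ``this is the minimum over odd-cardinality sets'' step entirely to the cut-search algorithm of \cite{siegel_projection}, whereas you give a self-contained exchange argument (threshold at $\tfrac12$, then flip the coordinate of smallest cost $2|\hat v_i-\tfrac12|$ to fix parity) that reproduces that result from scratch. Your version is more explicit and would stand on its own without the external reference; the paper's version is terser but relies on the reader knowing the cited work.
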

\begin{proof}
	We use Lemma \ref{swapLemma} to transform this inequality into a parity polytope membership test for vectors in the unit cube applied to $\prod\nolimits_{\left[0,1 \right]^d} \left( v \right)$. The parity polytope membership test is that used by the cut search algorithm implementation in \cite{siegel_projection}.
\end{proof}

Since the parity polytope is a subset of the unit cube, we return $\prod\nolimits_{\left[0,1 \right]^d}\left( v \right)$ when $v$ satisfies Theorem \ref{conditionTheorem} (line \ref{ppReturnBoxLine} of Algorithm \ref{pppAlgorithm}). When $v$ does not satisfy Theorem \ref{conditionTheorem}, we find the projection onto the parity polytope via a projection onto the probability simplex (lines \ref{ppReturnSimpLineOne} and \ref{ppReturnSimpLineTwo} of Algorithm \ref{pppAlgorithm}). The correctness of this method is given by Theorem \ref{simplexProjectTheroem} below.

\begin{theorem}
	\label{simplexProjectTheroem}
	Given $v \in \mathbb{R}^d$ such that $\prod\nolimits_{\left[0,1 \right]^d} \left( v \right) \notin \mathbb{PP}_d$ and $f\in\left\{0,1\right\}^d$ as calculated in Algorithm \ref{pppAlgorithm},
	\begin{equation*}
		\prod\nolimits_{\mathbb{PP}_d} \left( v \right) = T_f \left( \prod\nolimits_{\mathbb{S}_d} \left( T_f\left( v \right)  \right) \right)\text{.}
	\end{equation*}	
\end{theorem}
\begin{proof}
	We start with optimization problem (18) presented in \cite{siegel_projection}. We then apply the change of variable $x=T_f \left( y \right)$. This gives us a projection onto the probability simplex.
\end{proof}

This projection algorithm can be configured to have $\mathcal{O}\left(d\right)$ average complexity or $\mathcal{O}\left(d\log{d}\right)$ worst case complexity depending on the implementation of $\prod\nolimits_{\mathbb{S}_d}\left( \cdot \right)$ as outlined in Section \ref{simplexBackSubSec}.

\subsection{Adapting the Probability Simplex Projection}
We considered both methods presented in \cite{duchi_simplex} for hardware based projection onto the probability simplex. One has linear expected complexity, but is highly input dependent and iterative. This makes it a bad candidate for hardware. Because of this, we use the other method from \cite{duchi_simplex} which has $\mathcal{O}\left(d\log{d}\right)$ computational complexity. It is logical that this complexity is necessary since the expected linear time algorithm performs the same computations as the $\mathcal{O}\left(d\log{d}\right)$ algorithm while avoiding, or combining, operations when possible. It is exactly this characteristic that makes the algorithm unsuitable for hardware. The technique we use is described fully by Algorithm \ref{pspAlgorithm}, below. As can be guessed, the descending sort operation sorts the vector components into descending order.

\begin{algorithm}
	\caption{Hardware Compatible Projection onto $\mathbb{S}_d$}
	\label{pspAlgorithm}
	\textbf{Input:} $v\in\mathbb{R}^d$\\
	\textbf{Output:} $w \in \mathbb{S}_d$
	\begin{algorithmic}[1]
		\STATE $\mu = \text{descendingSort}\left(v\right)$
		\STATE $s_i = \frac{1}{i}\left(\sum\limits_{j=1}^{i} \mu_j - 1\right)$ \hspace{\algorithmicindent} $i=1,\dots,d$ \label{prefixSumAl}
		\STATE $\rho = \max\left\{ i \in \left\{ 1,\dots,d \right\} : \mu_i > s_i \right\}$  \label{maxIndex}
		\STATE $w_i = \max\left\{ v_i-s_\rho ,0 \right\}$ \hspace{\algorithmicindent} $\:i=1,\dots,d$
		
	\end{algorithmic}
\end{algorithm}

We have not made any modifications to the statement of this algorithm in \cite{duchi_simplex}.

\section{Hardware Implementation}
\label{hardAdaptSec}

We now discuss the implementation and architectural choices we made
to realize, in hardware, a routine that projects onto the parity
polytope. To measure the efficiency of our implementation we
consider the FPGA area usage of the synthesized algorithm, as well as delay.

We make a comment regarding the fixed-point representation we use in our implementations. Within
certain sub-modules of the projection routine we provide enough
additional dynamic range to avoid overflow. Then, after the
projection is computed, excess bits are truncated to deliver an output
with the same bit width as the projection module input.

\subsection{Implementing the Parity Polytope Projection}

The majority of steps in Algorithm~\ref{pppAlgorithm} comprise
straightforward operations that map directly to parallelizable
hardware constructs with linear complexity and constant delay. There
are two exceptions to this: the simplex projection and the argmin
operation. The simplex projection will be discussed in a subsequent
section. The argmin operation on line~\ref{argminLine} of
Algorithm~\ref{pppAlgorithm} warrants an explanation, as it does not
have constant delay.

\subsubsection{Argmin}
The output of the argmin module is the minimum along with a one-hot
indicator vector that indicates which element of the input vector was
minimal. This calculation is realized using a min-tree where each
node recursively calls two argmin operations. Using the two minima
received, the indicator vector of the higher value is zeroed and
concatenated with the indicator vector of the lower value. This
construction has linear complexity and logarithmic delay.

\subsection{Implementing the Probability Simplex Projection}

We now explain the various operations in the probability simplex
projection that do not map trivially to hardware implementation. We
will see in this section that the realization in hardware increases
the computation complexity (as measured by area usage) of the projections from
$\mathcal{O}\left(d\left(\log{d}\right)\right)$ to
$\mathcal{O}(d\left(\log{d}\right)^2)$. This increase is due to a parallel implementation of the sort operation.

\subsubsection{Sort}

As mentioned, sorting is a key aspect of the projection algorithm
developed in~\cite{draper_admm}. Sorting is relevant due to the
permutation symmetry of the parity polytope. Unfortunately, most
sorting methods are highly input dependent. For example,
quick-sort makes recursive calls of varying size depending
on the input data and the pivot value. We instead use sorting
networks. These are circuits composed of compare and swap modules that
perform the same operations regardless of the input data. Sorting
networks are explained in detail in~\cite{knuth_tacp}.

Our construction relies on delay-optimal sorting networks for dimension
16 or less, specified in~\cite{knuth_tacp}. For larger dimensions we
use Batcher's odd-even merge-sort (also discussed
in~\cite{knuth_tacp}) with recursive calls to the optimal sorting
networks when possible. Batcher's sort has the same recursive
structure as a standard merge sort. However, the merging operation is
performed in a clever manner that is input invariant. The price of
invariance is an increase in area to
$\mathcal{O}(d\left(\log{d}\right)^2)$ with a delay of
$\mathcal{O}(\left(\log{d}\right)^2)$. We comment that another sorting network, the AKS network, scales better in dimension than our
approach. However, according to~\cite{knuth_tacp}, AKS networks are completely impractical in any application.

\subsubsection{Prefix Sum}

Prefix sum refers to the sum operation on line~\ref{prefixSumAl} in
Algorithm~\ref{pspAlgorithm} whereby all partial sums of the sorted
vector are obtained. Performing this operation efficiently in serial
is straightforward. It is less obvious that a parallel algorithm can
perform this computation with logarithmic delay and linear
complexity. However, several such algorithms indeed do exist and have been
tested~\cite{vitoroulis_prefix} for their hardware compatibility in an
FPGA. In \cite{vitoroulis_prefix} the construction of Ladner and
Fischer~\cite{ladner_prefix} scales best with
respect to delay and area. We therefore use their method. It should
be noted that this method achieves the minimum possible delay of
$\ceil*{\log_2d}$ add operations on its critical path.

\subsubsection{Max Index}

Max index is the operation performed on line \ref{maxIndex} of
Algorithm \ref{pspAlgorithm}. The max index module takes as input a
bit vector of length $d$ where the high bits represent the set of
indices from which we want to select the maximum. The module
creates a one-hot bit vector where the high bit in the output
corresponds to the largest index high bit in the input. We can see
that in order for the $i^{\text{th}}$ output bit to be high, the
$i^{\text{th}}$ input bit must be high and all input bits above it
must be low. Writing out the Boolean expression for each output bit,
we can observe that this module is a prefix operation having almost
identical structure to the previously mentioned prefix sum. We take
the same approach as already described for prefix
sum. The difference being the use of logical AND operations instead of
additions.

Now that we have addressed how non-trivial operations will be mapped into hardware, we can evaluate the complexity of the projection hardware module. The sort operation is the most expensive module to implement. It gives the entire parity polytope projection module
an area complexity of
$\mathcal{O}(d\left(\log{d}\right)^2)$ with 
$\mathcal{O}(\left(\log{d}\right)^2)$ delay.

\section{Results}
\label{resultSec}

\subsection{Fixed Point Effects}
\subsubsection{Projections in the Unit Cube}
The objective of our first experiment is to determine the accuracy of the proposed projection algorithms when implemented in fixed-point. To test accuracy, we generate vectors uniformly at random from the three dimensional unit cube. We then project the vectors onto the parity polytope and probability simplex using varying degrees of fixed-point accuracy. 

To quantify the degradation with bit width, we start by generating double precision random vectors in MATLAB. We then apply a uniform quantizer to these vectors. This creates the fixed-point representation of the points to be projected. The fixed-point representation of the input and output vectors were both given 1 sign bit, 0 integer bits, and a varying number of fraction bits. This is because the absolute value of vector components will not exceed 1. These fixed-point vectors were then projected onto the parity polytope and probability simplex using the Verilog modules we created. As a point of comparison, the double precision input vectors were also projected onto the parity polytope and probability simplex. This latter projection was performed in MATLAB using double precision arithmetic. 

The average dimension normalized squared error is shown in Fig. \ref{unitCubeExp} for the quantized input, for the fixed-point parity polytope projection, and for the fixed-point probability simplex projection. The error measurement for each curve is relative to the corresponding double precision calculation. The fixed-point width is varied from 2 to 16 bits, corresponding to a number of fraction bits varying from 1 to 15.

\begin{figure}
	\centering
	\includegraphics[scale=0.6]{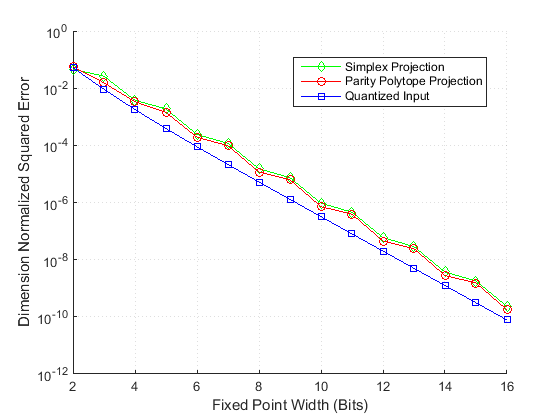}
	\caption{Accuracy for Unit Cube Projection onto $\mathbb{PP}_3$ and $\mathbb{S}_3$}
	\label{unitCubeExp}
\end{figure}

As would be expected for uniform quantization over a fixed range, the input error decreases exponentially (recall the 6dB/bit SNR rule of thumb). We also find that the projection error decreases exponentially in bit width. We believe the zig-zag in the projection error plot results from the fixed-point representation of $1/3$ in line \ref{prefixSumAl} of Algorithm \ref{pspAlgorithm}. Every second bit that is added to the binary representation of $1/3$ gives a disproportionate amount of accuracy since the representation is alternating $0$s and $1$s (i.e., $1/3 = 0.1010101\dots$).

The general shape of the curve - the exponential decrease in error - was seen for all trials on all dimensions (we simulated projections for various dimensions up to $d=27$). These results are not shown here since the zig-zagging effect is not as pronounced when there are more fractions from line \ref{prefixSumAl} of Algorithm \ref{pspAlgorithm} that cannot be represented exactly.

\subsubsection{Gaussian Projections}
The second fixed-point precision experiment investigates the trade-off between dynamic range and accuracy when projecting onto the parity polytope. Our approach is similar to the first experiment except the random input vector components have a normal distribution with mean zero and variance of 16.

Fig. \ref{gaussianExp} shows how the accuracy of various fixed-point representations compare to double precision projections. The input representation of all trials contains 1 sign sign bit. Each curve represents a different number of integer bits for the input representation. The x-axis shows the total fixed-point width. It is varied by changing the number of fraction bits in the input representation. Every output representation is composed of 1 sign bit, 1 integer bit, and a varying number of fraction bits giving the same fixed-point width as the input. We can see that the various error curves in Fig. \ref{gaussianExp} start at different locations. This is because each representation requires a different minimum width to accommodate its sign bit and integer bits.

\begin{figure}
	\centering
	\includegraphics[scale=0.6]{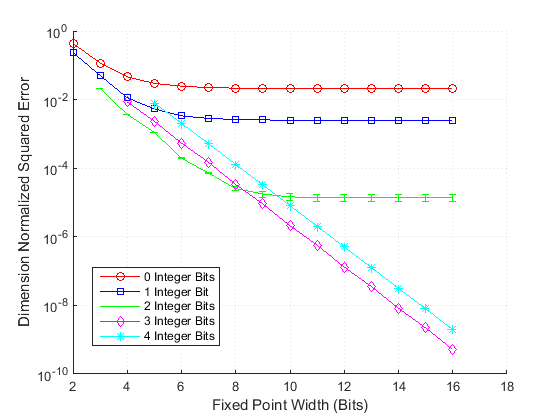}
	\caption{Accuracy for $\mathcal{N}(0,16)$ Projection onto $\mathbb{PP}_9$}
	\label{gaussianExp}
\end{figure}

For representations having 0, 1, or 2 integer bits in the input fixed-point representation, there is a small period of exponential decrease in error. This is followed by a saturation regime due to a lack of dynamic range. The 2 integer bit trial was the only curve with non-trivial error bars. Therefore they are displayed.

The 3 and 4 integer bit representations overcome the lack of dynamic range. This is indicated by the exponential decrease in error across the entire plotted range. We also observe that 3 integer bits performs better than 4 for all visible widths. This indicates that, for this class of input vectors, an extra bit is better spent on accuracy than on dynamic range.

Finally, we can also see what the optimal representation format is with regard to squared error. Below a fixed-point width of 8 bits, 2 integer bits is the best choice. Bits are better spent on accuracy than on dynamic range. However, when more than 8 bits are used in the representation, only having 2 integer bits results in a saturation regime, making the 3 integer bit representation the better choice.

\subsection{Resource Scaling}
We also synthesized the circuits in question for a range of dimensions to obtain real measurements of area and delay scaling. This synthesis was performed for an Altera Stratix V FPGA (model 5SGXEA7N2F45C2) which has 234,720 adaptive logic modules. The two projection circuits along with the sorting circuit were configured to have 8 bits of fixed-point precision: 1 sign bit, 1 integer bit, and 6 fraction bits for both input and output. The percentage of logic modules consumed by the synthesized circuits is a meaningful proxy measurement for area. In order to make the measurement even more meaningful, we disabled the synthesizer's ability to map arithmetic operations to dedicated DSP blocks.

\begin{figure}
	\centering
	\includegraphics[scale=0.6]{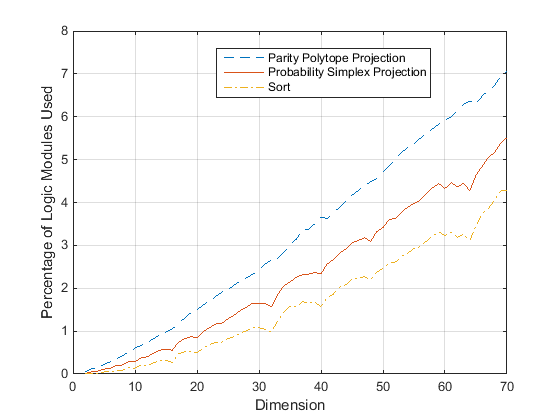}
	\caption{Measured Area Scaling}
	\label{areaGraph}
\end{figure}

The area measurements plotted in Fig. \ref{areaGraph} make the superlinear area scaling described in Section \ref{hardAdaptSec} visible. We can also see that all three modules follow similar area curves, indicating that the sort operation is the complexity dominating module. In order to observe practical delay scaling, we measured the critical path delay of the compiled circuits. Keep in mind that these circuits were compiled without any pipelining. Delay measurements are shown in Fig. \ref{delayGraph}.

\begin{figure}
	\centering
	\includegraphics[scale=0.6]{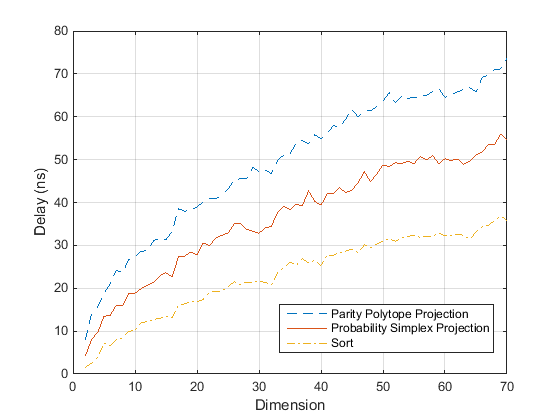}
	\caption{Measured Delay Scaling}
	\label{delayGraph}
\end{figure}

Similar to the area experiment, we see that delay measurements roughly follow the $\mathcal{O}(\left(\log{d}\right)^2)$ scaling predicted in Section \ref{hardAdaptSec}. In addition, the curves are quite similar. This is further evidence that the sort operation is the delay dominating module. Note also the jumps in area and delay at dimensions that are powers of two in Fig. \ref{areaGraph} and Fig. \ref{delayGraph}. These jumps occur because another layer must be added to the merge sort, emphasizing again the effect sorting has on the projection routines.

\section{Conclusion}
In this paper, we reviewed two essential optimization problems that have high throughput applications: projection onto the parity polytope and projection onto the probability simplex. Building off previous innovations, we were able to create hardware friendly algorithms to realize these projections. This enabled the creation of fixed-point implementations that were tested and found to have viable error curves and resource footprints.\balance 

The next step in our research is to apply these results to decoding error-correcting codes. We believe this will lead to a proof of concept for the LP decoding algorithm proposed in \cite{draper_admm}. We also believe that further research into more efficient, hardware compatible, projection algorithms would be extremely useful. This would be especially true if said research yields projection algorithms that completely avoid a sort operation.

\bibliographystyle{IEEEtran}
\bibliography{IEEEabrv,confPaper.bib}

\end{document}